\theoremstyle{plain}
\newtheorem{thm}{Theorem}[section]
\newtheorem{cor}[thm]{Corollary}
\newtheorem{lem}[thm]{Lemma}
\theoremstyle{definition}
\newcommand{\entr}{\mathsf{h}}
\newcommand{\eps}{\varepsilon}
\newcommand{\N}{\mathbb{N}}
\newcommand{\R}{\mathbb{R}}
\newcommand\Si{\mathbf{\Sigma}}
\newcommand{\abs}[1]{\lvert#1\rvert}
\begin{document}
\title{Entropy sensitivity of languages defined by infinite automata, via
Markov chains with forbidden transitions}
\author{Wilfried Huss, Ecaterina Sava, Wolfgang Woess}
\address{\parbox{.8\linewidth}{Institut f\"ur Mathematische Strukturtheorie\\ 
Technische Universit\"at Graz\\
Steyrergasse 30, 8010 Graz, Austria\\}}
\email{huss@finanz.math.tu-graz.ac.at, sava@TUGraz.at, woess@TUGraz.at}
\date{\today} 
\thanks{W.~Huss and W.~Woess were supported by the
Austrian Science Fund project FWF-P19115-N18. E.~Sava was supported
by the NAWI Graz project.}
\subjclass[2000] {05C63, 
                  37A35, 
		  60J10, 
                  68Q45. 
		  		  }
\keywords{Formal language, oriented graph, infinite sofic system, growth 
sensitivity, entropy, irreducible Markov chain, spectral radius}

\begin{abstract} 
A language $L$ over a finite alphabet $\Si$ is growth-sensitive 
(or entropy sensitive) if 
forbidding any finite set of factors $F$ of $L$ yields a sub-language $L^F$ whose 
exponential growth rate (entropy) is smaller than that of $L$. Let 
$(X, E, \ell)$ be an infinite, oriented, edge-labelled graph with label 
alphabet $\Si$. Considering the graph as an (infinite) automaton,  
we associate with any pair of vertices $x,y \in X$ the language $L_{x,y}$ 
consisting of all words that can be read as the labels along some path 
from $x$ to $y$. Under suitable, general assumptions we prove that these 
languages are growth-sensitive. This is based on using Markov chains with 
forbidden transitions. 
\end{abstract}

\maketitle
\markboth{{\sf W. Huss, E. Sava, W. Woess}}
{{\sf Entropy sensitivity}}
\section{Introduction}\label{sec:intro}

Let $\Si$ be a finite alphabet and $\Si^{*}$ the set of all finite 
words over $\Si$, including the empty word $\epsilon$. 
A language $L$ over $\Si$ is a subset of $\Si^{*}$.
All our languages will be infinite.  
We denote by $\abs{w}$ the length of the word $w$. 
A \emph{factor} of a word $w=a_1a_2\ldots a_n$ is
a word of the form $a_ia_{i+1}\ldots a_j$, with $1\leq i\leq j\leq n$. 
The \emph{growth} or \emph{entropy} of $L$ is 
\begin{equation*}
\entr(L) 
= \limsup_{n\to\infty}
\frac{1}{n} \log \bigl|\{w \in L:\: \abs{w} = n\}\bigr|.
\end{equation*}
For a finite, non-empty set 
$F\subset\Si^+ = \Si^*\setminus \{\epsilon\}$ consisting of factors 
of elements of $L$, we let 
\begin{equation*}
 L^{F} 
 = \{w\in L:\:\text{no}\; v\in F\; \text{is a factor of}\; w\}.
\end{equation*}
The issue addressed here is to provide conditions under which, for a
class of languages associated with infinite graphs,
$\entr(L^{F})<\entr(L)$. If this holds for \emph{any}
set $F$ of \emph{forbidden factors,} then the language $L$ is called
\emph{growth sensitive} (or \emph{entropy sensitive}). 

Questions related with growth sensitivity have been considered in different
context.

In \emph{group theory,} in relation with regular normal forms of finitely 
generated groups, the study of growth-sensitivity has been proposed by 
{\sc Grigorchuk and de la Harpe}~\cite{GrHa} as a tool for proving 
Hopfianity of a given group or class of groups, see also
{\sc Arzhantseva and Lysenok}~\cite{ArLy} and
{\sc Ceccherini-Silberstein and Scarabotti} \cite{CeSc}. 

In \emph{symbolic dynamics,} the number $\entr(L)$ associated with a regular
language accepted by a finite automaton with suitable properties
appears as the \emph{topological entropy} of a \emph{sofic system}, see
{\sc Lind and Marcus}~\cite[Chapters 3 \& 4]{LiMa}. 
Entropy sensitivity appears as the strict inequality between the entropies
of an irreducible sofic shift and a proper subshift~\cite[Cor. 4.4.9]{LiMa}.

Motivated by these bodies of work, {\sc Ceccherini-Silberstein and 
Woess}~\cite{CeWo1}, \cite{CeWo2}, \cite{Ce} have elaborated
practicable criteria that guarantee growth-sensitivity of \emph{context-free}
languages. 

The main result of the present note can be seen as a direct extension of 
\cite[Cor. 4.4.9]{LiMa} to the entropies of infinite sofic systems; see below 
for further comments and references.

\smallskip

Our basic object is an infinite oriented
graph $(X,E, \ell)$ 
whose edges are labelled by elements of a finite alphabet $\Si$. Each edge 
has the form $e=(x,a,y)$, where $e^-=x$ and $e^+= y \in X$ are the initial 
and the terminal vertex of $e$ and $\ell(e) = a \in \Si$ is its label. 
We will also write $x \xrightarrow{a} y$ for the edge $e=(x,a,y)$, or just 
$x\rightarrow y$ in situations where we do not care about the label.
Multiple edges and loops are allowed, but two edges with the same end 
vertices must have distinct labels. 

A \emph{path} of length $n$ in $(X,E, \ell)$ is a sequence $\pi=e_{1}e_{2}\ldots e_{n}$ 
of edges such that $e_{i}^{+}=e_{i+1}^{-}$, for $i=1,2,\ldots n-1$. 
We say that it is a path from $x$ to $y$, if $e_1^-=x$ and $e_n^+ =y$.
The label $l(\pi)$ of $\pi$ is the word
$\ell(\pi)=\ell(e_{1})\ell(e_{2})\ldots \ell(e_{n})\in \Si^{*}$ that we read 
along the path. We also allow the empty path from $x$ to $x$, whose label is 
the \emph{empty word} $\epsilon \in \Si^*$. 
For $x,y\in X$, denote by $\Pi_{x,y}$ 
the set of all paths $\pi$ from $x$ to $y$ in $(X,E, \ell)$. 

The languages which we consider here are
\begin{equation*} 
L_{x,y}
=\{\ell(\pi)\in\Si^{*}: \pi\in \Pi_{x,y}\}, 
\text{ where } x,y\in X. 
\end{equation*}
That is, we can interpret the edge-labelled graph $(X,E, \ell)$ as an infinite automaton (labelled 
digraph) with initial
state $x$ and terminal state $y$, so that $L_{x,y}$ is the language
accepted by the automaton. 

We say that $(X,E,\ell)$ is \emph{deterministic,} if for every vertex
$x$ and every $a \in \Si$, there is at most one edge with initial point $x$
and label $a$. Any automaton (finite or infinite) can be transformed into a 
deterministic one that accepts the same language, by the well known powerset construction.
See e.g. \cite[Prop. 1.4.1]{codes_automata}.

As in the finite case, we need an irreducibility assumption. 
The graph $(X,E, \ell)$ is called \emph{strongly connected,} if for every
pair of vertices $x$, $y$, there is an (oriented) path from $x$ to $y$.  
Furthermore, we say that it is \emph{uniformly connected,} if in addition the
following holds. 
\begin{itemize}
\item
There is a constant $K$ such that for very edge $x\rightarrow y$ there is
a path from $y$ to $x$ with length at most $K$.
\end{itemize}
In the finite case, the two notions coincide as one can take $K=|X|$. 
The \emph{forward distance}
$d^+(x,y)$ of $x,y \in X$ is the minimum length of a path from $x$ to $y$.
We write 
$$
\entr(X) = \entr(X,E,\ell) = 
\sup_{x,y\in X} \entr(L_{x,y})
$$
and call this the entropy of our oriented, labelled graph.
It is a well known and easy to prove fact that for a strongly connected graph,
$\entr(L_{x,y}) = \entr(X)$ for all $x, y \in X$.

We also need a reasonable assumption on the set of forbidden factors.

We say that a finite set $F \subset \Si^+$ is \emph{relatively dense}
in the graph $(X,E, \ell)$, if there is a constant $D$ such that for every
$x \in X$ there are $y \in X$ and $w \in F$ such that $d^+(x,y) \le D$
and there is a path starting at $y$ which has label $w$. 

Note that the assumptions of uniformly connectedness and relatively denseness
cannot be avoided, since they play an important role in the prove of
the main result. This fails withous this assumptions.

\begin{thm}\label{thm:A}
Suppose that $(X,E,\ell)$ is uniformly connected and deterministic with
label alphabet $\Si$. Let $F \subset \Si^+$ be a finite, non-empty set
which is relatively dense in $(X,E, \ell)$. Then 
$$
\sup_{x,y\in X}\entr(L_{x,y}^F) < \entr(X) \quad\text{strictly.}
$$
\end{thm}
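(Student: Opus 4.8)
The plan is to turn the counting of words into the counting of paths, to reinterpret the two entropies as (logarithms of inverse) growth rates of an edge-counting transition operator, and then to establish a \emph{uniform} drop of this growth rate when the forbidden transitions are removed.

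First I would use determinism to reduce everything to path counting. Since the automaton is deterministic, from a fixed initial vertex $x$ every word labels at most one path, so $\pi\mapsto\ell(\pi)$ is injective on $\Pi_{x,y}$ and
$$\abs{\{w\in L_{x,y}:\abs w=n\}}=\abs{\Pi^{(n)}_{x,y}}=:f_n(x,y),$$
the number of paths of length $n$ from $x$ to $y$; likewise $\abs{\{w\in L^F_{x,y}:\abs w=n\}}=f^F_n(x,y)$, the count of such paths whose label avoids every factor in $F$. Hence $\entr(L_{x,y})$ and $\entr(L^F_{x,y})$ are exactly the exponential growth rates of $f_n$ and $f^F_n$. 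By the fact quoted above, $\entr(L_{x,y})=\entr(X)$ for all $x,y$, so $\mathfrak r:=e^{-\entr(X)}$ is the common radius of convergence of the counting Green functions $\sum_n f_n(x,y)z^n$, and $f^F_n\le f_n$ gives radius $\mathfrak r^F\ge\mathfrak r$ for the $F$-avoiding series. The theorem amounts to the strict inequality $\mathfrak r^F>\mathfrak r$, uniformly in $x,y$.

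Next, to turn ``no factor in $F$ occurs'' into genuine \emph{forbidden transitions}, I would pass to the augmented automaton whose states are pairs (vertex of $X$, last $m-1$ labels read), where $m=\max_{w\in F}\abs w$. Reading a new edge is a deterministic transition of this larger graph, and completing a forbidden factor corresponds to a specific set of edges; deleting exactly these edges yields a subgraph whose paths are precisely the $F$-avoiding paths of $(X,E,\ell)$. Because $m$ is fixed, the augmented graph inherits determinism, uniform connectivity (with a constant of order $K+m$), and relative density of the deleted transitions. Normalising the edge-counting operator by $\mathfrak r$ turns it into a substochastic matrix, so the problem becomes the strict decrease of the spectral radius of a uniformly connected Markov chain when a relatively dense set of transitions is forbidden.

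The heart of the argument is a \emph{uniform block contraction}. Fix a block length $T\ge D+m+K$ and show there is $\eta>0$ such that, from \emph{every} vertex $v$, the proportion of length-$T$ excursions whose label avoids $F$ is at most $1-\eta$. The ingredients are: (i) relative density supplies, near the projection of $v$, a vertex within forward distance $D$ from which some $w\in F$ can be read, so at least one length-$(\le D+m)$ continuation out of $v$ contains a forbidden factor, which by uniform connectivity can be closed up into a length-$T$ excursion; (ii) determinism bounds the number of length-$T$ excursions from any vertex by $\abs{\Si}^{T}$, so a single forbidden excursion is a fraction at least $\abs{\Si}^{-T}$ of all of them; (iii) uniform connectivity makes continuation counts from different vertices comparable up to a bounded factor, so this fraction is bounded below \emph{uniformly} in $v$. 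A telescoping count over $\lfloor n/T\rfloor$ consecutive blocks, conditioning on the path up to the start of each block and applying the per-block bound to the conditional continuation counts, then gives $f^F_n(x,y)\le C\,(1-\eta)^{\lfloor n/T\rfloor}f_n(x,y)$, whence
$$\sup_{x,y}\entr(L^F_{x,y})\le\entr(X)+\tfrac1T\log(1-\eta)<\entr(X).$$
The main obstacle I expect is precisely step (iii): making the fraction killed per block bounded away from zero \emph{uniformly over the infinite vertex set}, rather than merely positive at each vertex — this is exactly where uniform connectivity (equalising the exponential number of continuations and letting a bounded forbidden detour be completed) and relative density (guaranteeing a forbidden detour within bounded forward distance) are indispensable. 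Handling the non-independence of successive blocks, so that the factor $1-\eta$ genuinely multiplies, via careful conditional counting is the other delicate point; the remaining steps are routine.
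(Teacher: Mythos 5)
Your reduction to path counting, your per-block estimate (at least one of the at most $\abs{\Si}^T$ length-$T$ continuations from any vertex contains a forbidden factor, so the $F$-avoiding ones are at most a fraction $1-\abs{\Si}^{-T}$ of the total), and your identification of where uniform connectedness and relative density enter are all sound, and they reproduce the content of the paper's Lemma \ref{lem:q_str_substoch}. The gap is in the step you yourself flag as delicate: the telescoping inequality $f^F_n(x,y)\le C(1-\eta)^{\lfloor n/T\rfloor}f_n(x,y)$ does not follow from the per-block bound. That bound kills an $\eta$-fraction of \emph{all} length-$T$ continuations from the current vertex $z$; to make the factor $(1-\eta)$ multiply against $f_n(x,y)$ for a \emph{fixed terminal vertex} $y$ you would need the killed continuations to account for an $\eta$-fraction of the continuations that actually reach $y$ at time $n$, i.e.\ uniform lower bounds on ratios of the form $f_{s-c}(z',y)/f_s(z,y)$. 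Your ingredient (iii) only gives comparability of the \emph{total} counts $N_T(z)$ for nearby $z$, which is much weaker. Summing over $y$ does not repair this: the induction would require $\sum_z f^F_{jT}(x,z)N_T(z)\le(1-\eta)^j\sum_z f_{jT}(x,z)N_T(z)$, and $N_T(z)$ is not constant in $z$. Falling back on $1\le N_T(z)\le\abs{\Si}^T$ yields only $\sum_y f^F_{nT}(x,y)\le(1-\eta)^n\abs{\Si}^{nT}$, which beats $e^{nT\entr(X)}$ only when $\entr(X)=\log\abs{\Si}$, i.e.\ essentially only in the fully deterministic case.

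The same gap is hidden in your sentence ``normalising the edge-counting operator by $\mathfrak r$ turns it into a substochastic matrix'': multiplying the counting matrix by $e^{-\entr(X)}$ makes its spectral radius at most $1$ but does \emph{not} make its row sums at most $1$. What is missing is the existence of a strictly positive $\rho(P)$-harmonic function $h$ --- an infinite-dimensional Perron--Frobenius input supplied by Pruitt's results (Lemmas \ref{lem:pruitt1} and \ref{lem:pruitt2}) --- and the Doob $h$-transform $p^h(x,a,y)=p(x,a,y)h(y)/\bigl(\rho(P)h(x)\bigr)$. This is precisely the vertex weighting that makes the per-block loss of mass iterate: after the transform the chain is stochastic with spectral radius $1$ and edge probabilities uniformly bounded below (uniform connectedness is used a second time here, to bound $h(y)/h(x)$ along edges), the block argument gives $\sum_y {p^h}^{(nk)}_F(x,y)\le(1-\eps_0)^n$, and one compares $p^{(n)}_F(x,y)$ with $(1-\eps_0)^{n/k}\rho(P)^n$ rather than with $p^{(n)}(x,y)$ itself --- which is all that the definition of $\rho_{x,y}(P_F)$ as a limsup of $n$-th roots requires. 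Your augmented-automaton device for turning factors into single forbidden edges is harmless but unnecessary (the paper forbids $k$-step sequences of transitions directly); the essential missing idea is the harmonic function.
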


We say that $(X,E, \ell)$ is \emph{fully deterministic,} if for every
$x \in X$ and $a \in \Si$, there is precisely one edge with initial point $x$
and label $a$. Remark that in automata theory, the classical terminalogy is deterministic and
complete, instead of fully deterministic.
Since in graph theory
a complete graph is one in which every pair a distinct vertices is connected 
by an unique edge, we shall use the notion of fully deterministic graphs
throughout this paper.

\begin{cor}\label{cor:B}
If $(X,E,\ell)$ is uniformly connected and fully deterministic then
$L_{x,y}$ is growth-sensitive for all $x,y \in X$.
\end{cor}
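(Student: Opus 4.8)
The plan is to derive Corollary~\ref{cor:B} from Theorem~\ref{thm:A}. Fix $x,y \in X$ and an arbitrary finite, non-empty set $F \subset \Si^+$ of forbidden factors that actually occur as factors of words in $L_{x,y}$ (if no word of $F$ is a factor of any element of $L_{x,y}$, then $L_{x,y}^F = L_{x,y}$ and there is nothing to forbid, so we may assume each $w \in F$ does occur). By the remark following Theorem~\ref{thm:A}, the two standing hypotheses we must verify in order to invoke it are that $(X,E,\ell)$ is uniformly connected (given) and deterministic (which follows immediately from fully deterministic, since ``precisely one edge'' in particular means ``at most one edge''), and that $F$ is relatively dense. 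So the whole content of the corollary reduces to the following point: when the graph is \emph{fully} deterministic, \emph{every} non-empty finite set $F$ of factors of $L_{x,y}$ is automatically relatively dense. Once this is established, Theorem~\ref{thm:A} yields $\sup_{x,y}\entr(L_{x,y}^F) < \entr(X) = \entr(L_{x,y})$, which is exactly growth-sensitivity of $L_{x,y}$.

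The key step is therefore verifying relative denseness using full determinism. First I would pick any single word $w = a_1 a_2 \cdots a_m \in F$, say of length $m$. The crucial feature of a fully deterministic graph is that from \emph{every} vertex $z \in X$ and for every letter $a \in \Si$ there is exactly one edge leaving $z$ with label $a$; consequently, from any vertex there exists a (unique) path whose label is $w$, obtained by reading off $a_1, a_2, \dots, a_m$ step by step. In other words, the word $w$ can be read starting from any vertex of $X$ whatsoever. This is precisely what relative denseness demands, and in fact it holds with the optimal constant $D = 0$: for every $x \in X$ we may take $y = x$ itself, so that $d^+(x,y) = 0 \le D$, and the path with label $w$ starts at $y = x$. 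Thus $F$ is relatively dense with $D = 0$, independently of $x$ and $y$.

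The verification above is essentially the whole argument, and I expect the only subtlety to be bookkeeping rather than a genuine obstacle. Two small points deserve care. First, one should confirm that the definition of relative denseness is met in the precise form stated, namely that there exist $y$ and $w \in F$ with $d^+(x,y) \le D$ and a path from $y$ labelled $w$; taking $D=0$ and $y=x$ as above does this, and we do not even need the flexibility of choosing different $w$ for different $x$. Second, strictly speaking relative denseness and uniform connectedness are properties of the graph and of $F$, not of the chosen endpoints, so establishing them once suffices for all pairs $x,y$ simultaneously. Having checked both hypotheses of Theorem~\ref{thm:A}, the strict inequality $\entr(L_{x,y}^F) \le \sup_{x',y'}\entr(L_{x',y'}^F) < \entr(X)$ holds, and since the graph is strongly connected (being uniformly connected) we have $\entr(X) = \entr(L_{x,y})$ by the fact recalled in the introduction. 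As $F$ was an arbitrary finite non-empty set of forbidden factors, this establishes growth-sensitivity of $L_{x,y}$ for every $x,y \in X$.
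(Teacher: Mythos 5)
Your proposal is correct and is exactly the paper's argument: the paper's one-line proof observes that in a fully deterministic graph every word of $\Si^*$ labels a path starting at every vertex, which is precisely your verification that any finite non-empty $F$ is relatively dense (with $D=0$), after which Theorem \ref{thm:A} applies. You simply spell out the details (fully deterministic implies deterministic, and $\entr(X)=\entr(L_{x,y})$ by strong connectedness) that the paper leaves implicit.
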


Indeed, in this case, for every $x \in X$ and every $w \in \Si^*$, there
is precisely one path with label $w$ starting at $x$.

With our edge-labelled graph $(X,E,\ell)$, we can consider the \emph{full shift space} 
which consists of all bi-infinite words over $\Si$ that can be read along the edges of 
some bi-infinite path in $(X,E, \ell)$. When $(X,E,\ell)$ is strongly connected,
the entropy $\entr(L_{x,y})$ is independent of $x$ and $y$ and 
equals the topological entropy of the full shift space of the graph.
See e.g. {\sc Gurevi\v c}~\cite{Gu}, 
{\sc Petersen}~\cite{Pe} or {\sc Boyle, Guzzi and G\'omez}~\cite{BGG}
for a selection of related work and references, and also
the discussion in \cite[\S 13.9]{LiMa}.  

If we consider the shift space consisting of all those
bi-infinite words as above that do not contain any factor in $F$,  
then the interpretation of Corollary \ref{cor:B} is that
the associated entropy is strictly smaller than $\entr(X)$.

The theorem, once approached in the right way, is not hard to prove.
It is based on a classical tool, a version of the Perron-Frobenius 
theorem for infinite
non-negative matrices; see e.g. {\sc Seneta}~\cite{Se}. We shall first
reformulate things in terms of Markov chains 
and forbidden transitions.

\section{Markov chains and forbidden transitions}\label{sec:Markov} 
We now equip the oriented, edge-labelled graph $(X,E,\ell)$ with additional data:
with each edge $e=(x,a,y)$, we associate a probability 
$p(e) = p(x,a,y) \ge \alpha > 0$,  where $\alpha$ is a fixed constant,
such that
\begin{equation}\label{eq:substoch}
\sum_{e \in E \,:\,e^-=x} p(e) \le 1 \quad\text{for every}\; x \in X\,.
\end{equation}
Our assumption to have the uniform lower bound $p(e) \ge \alpha$ for each 
edge implies that the outdegree (number of outgoing edges) of each vertex
is bounded by $1/\alpha$.
We interpret $p(e)$ as the probability that a particle with current position
$x=e^-$ moves in one (discrete) time unit along $e$ to its end vertex $y = e^+$.
Observing the successive random positions of the particle at the time 
instants $0,1,2,\dots$, we obtain a Markov chain with state space $X$ whose
one-step transition probabilities are 
$$
p(x,y) = \sum_{a \in \Si : (x,a,y) \in E} p(x,a,y)\,.
$$
We shall also want to record the edges, resp. their labels used in each step,
which means to consider a Markov chain on a somewhat larger 
state space, but we will not need to formalise this in detail.
In \eqref{eq:substoch}, we admit the possibility that $1 - \sum_y p(x,y) > 0$
for some $x$. This number is then interpreted as the probability that a
particle positioned at $x$ dies at the next step.

We write $p^{(n)}(x,y)$ for the probability that the particle starting at
$x$ is at position $y$ after $n$ steps. This is the $(x,y)$-element of
the $n$-power $P^n$ of the transition matrix 
$P = \bigl( p(x,y) \bigr)_{x,y \in X}\,$. If $(X,E, \ell)$ is strongly connected,
then $P$ is irreducible, and it is well-known that the number
$$
\rho(P) = \limsup_{n \to \infty} p^{(n)}(x,y)^{1/n}
$$
is independent of $x$ and $y$. See once more \cite{Se}.
Often, $\rho(P)$ is called the spectral radius of $P$. It is the parameter
of exponential decay of the transition probabilities.

Let once more $F \subset \Si^+$ be finite. We interpret the elements of
$F$ as sequences of \emph{forbidden transitions.} That is, we restrict 
the motion of the particle: at no time, it is allowed to traverse any path  
$\pi$ with $\ell(\pi) \in F$ in $k$ successive steps, where $k$ is
the length of $\pi$. 
We write $p^{(n)}_F(x,y)$ for the probability that the particle starting at
$x$ is at position $y$ after $n$ steps, without having made any such sequence
of forbidden transitions. Let
$$
\rho_{x,y}(P_F) = \limsup_{n \to \infty} p_F^{(n)}(x,y)^{1/n}, \quad
x,y \in X\,.
$$
These numbers are not necessarily independent of $x$ and $y$, and they are not
the elements of the $n$-matrix power of some substochastic matrix. 

Recall that a transition matrix $Q = \big(q(x,y)\big)_{x,y\in X}$ on the state space $X$
is called \emph{substochastic} if there exists a constant
$\varepsilon > 0$, such that for all $x\in X$
\begin{equation*}
\sum_{y\in X} q(x,y) \leq 1-\varepsilon.
\end{equation*}
That is, all row sums are bounded by $1-\varepsilon$.
In order to give an upper bound for the restricted transition 
probabilities $p^{(n)}_F(x,y)$, we first show the following.  

\begin{lem}\label{lem:q_str_substoch}
Suppose that $(X,E,l)$ is strongly 
connected with label alphabet $\Si$ and equipped with transition 
probabilities $p(e) \ge \alpha > 0$, $e \in E$. Let $F \subset \Si^+$ be a 
finite, non-empty set which is relatively dense in $(X,E, \ell)$. Then there are
$k\in\mathbb{N}$ and $\eps_0>0$ such that  
$$
\sum_{y \in X} p_F^{(k)}(x,y) \le 1-\eps_0 \quad \text{for all}\; x \in X\,.
$$
In other words, the transition matrix 
$Q = \bigl(p_F^{(k)}(x,y)\bigr)_{x,y \in X}$ is strictly substochastic, with all row sums bounded by $1-\eps_0\,$.
\end{lem}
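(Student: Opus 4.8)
The plan is to exploit the relative density of $F$ together with the uniform lower bound $p(e) \ge \alpha$ to show that from every starting vertex, a fixed positive fraction of probability mass is lost within a bounded number of steps because it is forced to traverse a forbidden factor. Fix the constant $D$ from the definition of relative density, and let $m = \max_{w \in F} \abs{w}$ be the length of the longest forbidden word. I would set $k = D + m$; this is the time horizon within which, no matter where the particle starts, it has had the opportunity to reach a vertex $y$ with $d^+(x,y) \le D$ and then read a forbidden label $w \in F$ of length at most $m$. The key quantitative input is that each single edge carries probability at least $\alpha$, so any specific path of length at most $k$ is traversed with probability at least $\alpha^k$.

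The core estimate I would carry out is the following. Fix $x \in X$. By relative density there exist $y \in X$ with $d^+(x,y) \le D$ and $w \in F$ with a path $\pi_w$ of length $\abs{w} \le m$ starting at $y$ and labelled $w$. Choose a shortest path $\sigma$ from $x$ to $y$; it has length $j = d^+(x,y) \le D$. Concatenating $\sigma$ with $\pi_w$ gives a path from $x$ of length $j + \abs{w} \le k$ whose traversal is forbidden at the final $\abs{w}$ steps, because its label ends in the forbidden factor $w$. The probability that the \emph{unrestricted} walk follows exactly this concatenated path (and then, if $j + \abs{w} < k$, continues for the remaining steps in any manner) is at least the product of the single-step probabilities along $\sigma$ and $\pi_w$, hence at least $\alpha^{j+\abs{w}} \ge \alpha^k$. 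Every trajectory following this path is excluded from the count defining $p_F^{(k)}(x,\cdot)$, since it makes a forbidden sequence of transitions. Therefore the total $F$-avoiding mass satisfies
\begin{equation*}
\sum_{y \in X} p_F^{(k)}(x,y) \le 1 - \alpha^k,
\end{equation*}
and setting $\eps_0 = \alpha^k$ gives a bound uniform in $x$, which is exactly the claim.

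The main obstacle to make precise is the bookkeeping that separates the \emph{path-level} description (where forbidden factors and the lower bound $\alpha$ live) from the \emph{vertex-level} transition probabilities $p(x,y)$, which sum over labels. Since the forbidden constraint is phrased on labelled paths, I would work throughout with the edge-recording chain alluded to in the text, so that each realised trajectory carries a well-defined label word and the event ``the walk traverses $\sigma \pi_w$'' is a genuine cylinder set of probability at least $\alpha^k$. One must also check that this excluded cylinder is disjoint from the surviving $F$-avoiding trajectories of length $k$, which holds by construction since every trajectory in the cylinder reads $w \in F$ as a factor. A minor point worth verifying is that the bound $\alpha^k$ does not depend on the particular $x$: the exponent is controlled by $k = D + m$, a global constant, and the base $\alpha$ is the global lower bound, so uniformity is automatic even though the witnessing $y$ and $w$ vary with $x$. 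With these disjointness and uniformity checks in place, the substochasticity of $Q = \bigl(p_F^{(k)}(x,y)\bigr)_{x,y \in X}$ with $\eps_0 = \alpha^k$ follows at once.
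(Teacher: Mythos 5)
Your proof is correct and takes essentially the same approach as the paper: set $k = D + \max_{w\in F}\abs{w}$, use relative density to produce from every starting vertex $x$ a path of length at most $k$ whose label ends in a forbidden word, bound its probability below by $\alpha^k$ using the uniform edge bound, and conclude that every row sum of $Q$ is at most $1-\eps_0$ with $\eps_0 = \alpha^k$. The only cosmetic difference is that the paper pads the exhibited path with an arbitrary continuation to length exactly $k$ and subtracts its probability from $\sum_y p^{(k)}(x,y)$, whereas you subtract the cylinder probability of the unpadded path directly from the total mass $1$; both are valid.
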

\begin{proof}
Let $R = \max_{w \in F} |w|$, and let $D \in \N$ be the constant from the
definition of relative denseness of $F$. Set $k=D+R$. For each $x \in X$,
we can find a path $\pi_1$ from $x$ to some 
$y \in X$ with length $d \le D$ and a path $\pi_2$ starting at $y$ which has label $w\in\Si^*$. Let $z$
be the endpoint of $\pi_2$, and choose any path $\pi_3$ that starts at
$z$ and has length $k - d - |w|$. (Such a path exists by strong
connectedness.) Then let $\pi$ be the path obtained by concatenating
$\pi_1\,$, $\pi_2$ and $\pi_3\,$.

The probability that the Markov chain starting at $x$ makes its first $k$
steps along the edges of $\pi$ is
\begin{equation*}
\mathbb{P}(\pi)\geq \alpha ^{k}=\eps_0 > 0. 
\end{equation*}
Hence
\begin{equation*}
\sum_{y\in X} p^{(k)}_F(x,y)\leq \sum_{y\in X} p^{(k)}(x,y) -\mathbb{P}(\pi)
\leq 1-\eps_0, 
\end{equation*}
and this upper bound holds for every $x$.
\end{proof}

The matrix $P$ acts on functions $h: X \to \R$ by 
$Ph(x) = \sum_y p(x,y) h(y)$.
Next, we state two key results due to  Pruitt \cite[Lemma 1]{pruitt_1964} and 
\cite[Corollary to Theorem 2]{pruitt_1964}, which will be used in
the proof of the main result.

\begin{lem}\label{lem:pruitt1}
If the transition matrix $P$ is irreducible and $Ph \leq s h$ for some 
$s > 0$ and $h \not= 0$, then $h > 0$.
\end{lem}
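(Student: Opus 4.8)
The plan is to deduce positivity from the one-step inequality by first propagating it to all powers of $P$ and then invoking irreducibility. To begin, I would observe that a matrix with nonnegative entries is monotone: $f \le g$ implies $Pf \le Pg$, since $P(g-f) \ge 0$. Applying this to $Ph \le sh$, and using $s>0$, gives $P^2h = P(Ph) \le P(sh) = s\,Ph \le s^2 h$, and an immediate induction yields
\begin{equation*}
P^n h \le s^n h, \qquad\text{that is}\qquad \sum_{y\in X} p^{(n)}(x,y)\,h(y) \le s^n h(x),
\end{equation*}
for every $x \in X$ and every $n \ge 0$. (Reading the hypothesis as $h \ge 0$ with $h \not\equiv 0$, all the sums are well defined in $[0,\infty]$ because every summand is nonnegative, and the right-hand side is finite, so each $P^nh(x)$ is finite as well.)

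Next I would exploit irreducibility. Since $h$ is nonnegative and not identically zero, there is a vertex $y_0$ with $h(y_0) > 0$. Fix an arbitrary $x \in X$. By irreducibility of $P$ there exists $n=n(x)$ with $p^{(n)}(x,y_0) > 0$. In the displayed inequality for this particular $n$, each term of the sum on the left is a product of two nonnegative factors, so I may discard all of them except the one indexed by $y_0$:
\begin{equation*}
s^n h(x) \;\ge\; \sum_{y\in X} p^{(n)}(x,y)\,h(y) \;\ge\; p^{(n)}(x,y_0)\,h(y_0) \;>\; 0 .
\end{equation*}
As $s^n > 0$, this forces $h(x) > 0$, and since $x$ was arbitrary we conclude $h > 0$ on all of $X$, which is the assertion.

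The step demanding the most care is not computational but structural: the entire argument rests on being allowed to throw away the remaining terms of the sum, and this is legitimate only because $h \ge 0$. Indeed, nonnegativity is genuinely needed, since on the two-state chain with $P = \left(\begin{smallmatrix}0&1\\1&0\end{smallmatrix}\right)$ the vector $h \equiv -1$ satisfies $Ph = sh$ with $s=1$ yet fails to be positive; so I would read the hypothesis $h \neq 0$ as $h \ge 0$, $h \not\equiv 0$, in accordance with the convention of Pruitt~\cite{pruitt_1964}. Beyond that point, the only substantive ingredients are the qualitative form of irreducibility — that \emph{some} power satisfies $p^{(n)}(x,y_0) > 0$ — together with the propagation $P^n h \le s^n h$; no quantitative control on $\rho(P)$ or on the size of $n(x)$ enters the proof.
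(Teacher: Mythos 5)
Your proof is correct. The paper does not prove this lemma at all --- it is quoted verbatim from Pruitt \cite[Lemma 1]{pruitt_1964} --- so there is no in-paper argument to compare against; your argument (iterate $Ph\le sh$ to get $P^nh\le s^nh$, then use irreducibility to pick $n$ with $p^{(n)}(x,y_0)>0$ for a point $y_0$ where $h(y_0)>0$) is exactly the standard one and is what Pruitt's proof amounts to. You were also right to flag that the statement as printed is literally false without the implicit hypothesis $h\ge 0$ (your two-state counterexample settles that), and that nonnegativity is indeed the intended reading, consistent with Pruitt's setting and with the way the lemma is used in Step 2 of the proof of Theorem \ref{thm:C}, where Lemma \ref{lem:pruitt2} supplies a nonnegative, nonzero solution.
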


\begin{lem}\label{lem:pruitt2}
If the transition matrix $P = \{ p(x,y)\}_{x,y\in X}$ is such that 
for every $x\in X$ the entries $p(x,y) = 0$ for all $y\in X$
except finitely many, then the equation
\begin{equation*}
P h = s h
\end{equation*}
has a solution for all $s \geq \rho(P)$.
\end{lem}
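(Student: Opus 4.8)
The plan is to construct, for each $s \ge \rho(P)$, a positive function $h$ with $Ph = sh$; since any nonzero solution of $Ph \le sh$ is automatically positive by Lemma \ref{lem:pruitt1}, it will suffice to produce a single nonzero $h$. Throughout I use the standing irreducibility (strong connectivity), which is what makes $\rho = \rho(P)$ independent of the base points, and I fix a reference vertex $o \in X$. The main tool is the Green function
$$
G(x,y\mid z) = \sum_{n \ge 0} p^{(n)}(x,y)\, z^n, \qquad z > 0,
$$
whose radius of convergence is exactly $1/\rho$ at every pair $(x,y)$. I treat the two regimes $s > \rho$ and $s = \rho$ separately, the latter being the delicate one.

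For $s > \rho$ put $z = 1/s < 1/\rho$, so that $G(\cdot, w \mid z)$ is finite for every $w$. The standard first-step identity $G(x,w\mid z) = \delta_{x,w} + z\sum_y p(x,y) G(y,w\mid z)$ shows that $u_w := G(\cdot, w\mid z)$ satisfies $P u_w = s\, u_w$ at every vertex except $w$. I then normalise $h_w := u_w / u_w(o)$, so $h_w(o) = 1$ and $Ph_w(x) = s\, h_w(x)$ for all $x \ne w$. A Harnack-type estimate, obtained from irreducibility by inserting a fixed finite path from $o$ to $x$ (and from $x$ to $o$) into the power series, yields constants $0 < c_x \le C_x < \infty$, independent of $w$, with $c_x \le h_w(x) \le C_x$ for every $w$.

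Now I let $w \to \infty$, i.e.\ along a sequence leaving every finite subset of the infinite state space $X$. Because the $h_w$ are pointwise bounded between $c_x$ and $C_x$ on the countable set $X$, a diagonal extraction gives a subsequence converging pointwise to some $h$ with $h(o) = 1$. Since $P$ is row-finite, $Ph_w(x) = \sum_y p(x,y) h_w(y)$ is a finite sum at each $x$, so the limit passes through $P$: $Ph_{w_k}(x) \to Ph(x)$. For each fixed $x$ we have $w_k \ne x$ for all large $k$, whence $Ph(x) = s\,h(x)$; as $x$ was arbitrary, $Ph = sh$, and $h \ne 0$. This settles $s > \rho$.

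For the boundary value $s = \rho$ the Green function may diverge at $z = 1/\rho$, so the above construction breaks down and a second limit is needed. I take $s_j \downarrow \rho$ and let $h_j$ be the eigenfunctions just produced, normalised by $h_j(o) = 1$. The key point is that the Harnack constants can be chosen uniformly in $j$: for $s_j$ in the compact range $[\rho, \rho+1]$ the parameter $z_j = 1/s_j$ stays in $[1/(\rho+1),\,1/\rho]$, so the path-insertion bounds give $c_x, C_x$ not depending on $j$. A further diagonal extraction produces a pointwise limit $h$ with $h(o)=1$, and passing to the limit in $Ph_j = s_j h_j$ (again a finite sum at each vertex) yields $Ph = \rho\, h$. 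I expect this boundary case $s = \rho$ to be the main obstacle: one cannot work directly at $z = 1/\rho$, and the whole argument hinges on the uniform Harnack bounds that keep the limiting function both finite and bounded away from zero at the reference vertex, so that the constructed $h$ is genuinely nonzero rather than collapsing to $0$.
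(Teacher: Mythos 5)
The paper does not prove this lemma at all: it is quoted verbatim as the Corollary to Theorem 2 of Pruitt \cite{pruitt_1964}, so there is no internal proof to compare against. Your argument is a correct, self-contained substitute, and it is the standard construction of $s$-harmonic functions for irreducible, row-finite, non-negative matrices: the first-step identity $G(x,w\mid z)=\delta_{x,w}+z\sum_y p(x,y)G(y,w\mid z)$ does make $u_w=G(\cdot,w\mid z)$ an eigenfunction off $w$; the path-insertion Harnack bounds $c'z^{m'}\le h_w(x)\le 1/(c z^{m})$ are indeed independent of $w$ and uniform for $z$ in a compact subinterval of $(0,1/\rho(P)]$; row-finiteness lets the (finite) sums pass to the pointwise limits; and the normalisation $h(o)=1$ together with the lower Harnack bound keeps the limit from collapsing. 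The two diagonal extractions (first $w\to\infty$, then $s_j\downarrow\rho(P)$) are both legitimate on the countable state space. Two small points are worth making explicit. First, your $s>\rho(P)$ step needs $X$ to be infinite (otherwise there is no sequence $w\to\infty$, and indeed for finite irreducible $P$ the conclusion is false for $s>\rho(P)$ by Perron--Frobenius); this is harmless here since the paper's standing assumption is that $X$ is infinite, but it should be said. Second, the appeal to Lemma \ref{lem:pruitt1} is not really needed: your limit function already satisfies $h(x)\ge c_x>0$ by construction. What your route buys, compared with the paper's bare citation, is a transparent proof whose only inputs are irreducibility and row-finiteness; what it costs is that it is a construction "at infinity" rather than Pruitt's more algebraic treatment of sub-invariant vectors, but the conclusion and hypotheses match the cited statement exactly.
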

Using these lemmatas, we prove the following result on sensitivity
of the Markov chain with respect to forbidding the transitions in $F$.

\begin{thm}\label{thm:C}
Suppose that $(X,E,\ell)$ is uniformly 
connected with
label alphabet $\Si$ and equipped with transition probabilities
$p(e) \ge \alpha > 0$, $e \in E$. Let $F \subset \Si^+$ be a 
finite, non-empty set which is relatively dense in $(X,E, \ell)$.
Then 
$$
\sup_{x,y \in X} \rho_{x,y}(P_F) < \rho(P) \quad\text{strictly.}
$$
\end{thm}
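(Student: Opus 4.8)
The plan is to bound the forbidden transition probabilities $p_F^{(n)}(x,y)$ by comparing the walk of length $n$ against the substochastic matrix $Q = \bigl(p_F^{(k)}(x,y)\bigr)_{x,y \in X}$ produced by Lemma~\ref{lem:q_str_substoch}. The crucial observation is that a forbidden-transition walk of length $mk$ that avoids all factors in $F$ is, in particular, a concatenation of $m$ blocks of length $k$, each of which avoids the factors in $F$; hence its probability is dominated by the corresponding entry of $Q^m$. More precisely, I would argue that $p_F^{(mk)}(x,y) \le \sum_{z} q^{(m-1)}(x,z)\,p_F^{(k)}(z,y)$, so that the $mk$-step forbidden probabilities decay at least as fast as the $m$-th power of $Q$. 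Since every row sum of $Q$ is at most $1-\eps_0 < 1$, the iterated row sums of $Q^m$ are bounded by $(1-\eps_0)^m$, which gives
\begin{equation*}
\sum_{y \in X} p_F^{(mk)}(x,y) \le (1-\eps_0)^{m}.
\end{equation*}

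From this uniform bound I would extract the spectral radius estimate. Writing $n = mk + r$ with $0 \le r < k$ and discarding the last $r$ steps (each probability is at most $1$), one gets $p_F^{(n)}(x,y) \le \sum_{y'} p_F^{(mk)}(x,y') \le (1-\eps_0)^{m}$ with $m = \lfloor n/k \rfloor$. Taking $n$-th roots and passing to the $\limsup$ yields
\begin{equation*}
\rho_{x,y}(P_F) = \limsup_{n \to \infty} p_F^{(n)}(x,y)^{1/n}
\le (1-\eps_0)^{1/k},
\end{equation*}
and this bound is uniform in $x$ and $y$, so $\sup_{x,y} \rho_{x,y}(P_F) \le (1-\eps_0)^{1/k}$.

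The remaining task is to show that this constant is strictly below $\rho(P)$, and here is where Lemmas~\ref{lem:pruitt1} and~\ref{lem:pruitt2} enter. The subtlety is that $(1-\eps_0)^{1/k} < 1$ is immediate, but $\rho(P)$ itself can be less than $1$ (the chain is only substochastic), so the strict inequality is not automatic from $\eps_0 > 0$ alone. To handle this I would first use Lemma~\ref{lem:pruitt2} to produce a positive $\rho(P)$-harmonic-type function $h$ with $Ph = \rho(P)\,h$, which by Lemma~\ref{lem:pruitt1} satisfies $h > 0$ everywhere. I would then $h$-transform the problem: replacing $p(x,y)$ by $\hat p(x,y) = p(x,y)h(y)/\bigl(\rho(P)\,h(x)\bigr)$ yields a genuinely stochastic matrix whose forbidden-transition analogue $\hat p_F^{(k)}$ inherits a strict row-sum deficit $\hat\eps_0 > 0$ from the same relative-denseness argument as in Lemma~\ref{lem:q_str_substoch}. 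Running the block-decomposition estimate in the $h$-transformed chain then gives $\rho_{x,y}(P_F) \le \rho(P)\,(1-\hat\eps_0)^{1/k} < \rho(P)$ uniformly, completing the proof.

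I expect the main obstacle to be the second step: getting the \emph{strict} comparison against $\rho(P)$ rather than against $1$. Simply bounding forbidden walks by unrestricted ones loses the factor $\rho(P)$, so the $h$-transform is essential to normalise the spectral radius to $1$ before exploiting the row-sum deficit; verifying that the relative-denseness construction still produces a uniform deficit $\hat\eps_0$ after the $h$-transform—where the weights $h(y)/h(x)$ along the special path $\pi$ must be controlled—is the delicate point, since $h$ is not a priori bounded above or below.
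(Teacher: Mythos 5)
Your proposal follows essentially the same route as the paper: block-decompose against the substochastic matrix $Q$ from Lemma~\ref{lem:q_str_substoch} to get $\rho_{x,y}(P_F)\le(1-\eps_0)^{1/k}$ in the stochastic, $\rho(P)=1$ case, then reduce the general case via Pruitt's lemmas and the $h$-transform. The one point you flag as delicate --- a uniform lower bound on $p^h(e)$ despite $h$ being unbounded --- is resolved in the paper exactly where you would expect: uniform connectedness gives a return path from $e^+$ to $e^-$ of length at most $K$, whence $h(e^+)/h(e^-)\ge\bigl(\alpha/\rho(P)\bigr)^{K}$ and $p^h(e)\ge\bigl(\alpha/\rho(P)\bigr)^{K+1}$.
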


\begin{proof} We shall proceed in two steps.

\emph{Step 1. We assume that $P = \bigl( p(x,y) \bigr)_{x,y \in X}$ 
is stochastic and that $\rho(P)=1$.} 

Consider the matrix $Q$ of Lemma \ref{lem:q_str_substoch}. Let
$Q^n = \bigl( q^{(n)}(x,y) \bigr)_{x,y \in X}$ be its $n$-th matrix 
power. $q^{(n)}(x,y)$ is the probability that the Markov chain starting at $x$
is in $y$ at time $nk$ and does not make any forbidden sequence of transitions
in each of the discrete time intervals $[(j-1)k\,,\,jk]$ for 
$j \in \{1,\dots, n\}$. Therefore
$$
p^{(nk)}_F(x,y) \le q^{(n)}(x,y)\,,
$$
and also, by the same reasoning,  for $i=0, \dots, k-1$,
$$
p^{(nk+i)}_F(x,y) \le \sum_{z \in X} q^{(n)}(x,z)p^{(i)}_F(z,y)\,,\quad
i=0\, \dots, k-1.
$$
Therefore, for every $x \in X$ and $i=0, \dots, k-1$,
$$
\sum_{y \in X} p^{(nk+i)}_F(x,y) 
\le \sum_{z \in X} q^{(n)}(x,z) 
\underbrace{\sum_{y \in X}p^{(i)}_F(z,y)}_{\displaystyle\le 1}
\le (1-\eps_0)^n\,,
$$
since Lemma \ref{lem:q_str_substoch} implies that the row sums of the
matrix power $Q^n$ are bounded above by $(1-\eps_0)^n$.
We conclude that  
$$
\limsup_{n\to\infty} p_F^{(nk+i)}(x,y)^{1/(nk+i)} \leq (1-\eps_0)^{1/k}\,,
$$
so that $\rho_{x,y}(P_F) \le (1-\eps_0)^{1/k} = 1-\eps$,
where $\eps > 0$. 

\smallskip\noindent
\emph{Step 2. General case.}
We reduce this case to the previous one. 

Since $P$ is irreducible and
every row of $P$ has only finitely many non-zero entries,
Lemma \ref{lem:pruitt1} and Lemma \ref{lem:pruitt2}
guaranty the existence of a strictly
positive solution $h:X\to\R$ for the equation 
\begin{equation*}
P h = \rho(P) \cdot h,
\end{equation*}
that is, $h$ is \emph{$\rho(P)$-harmonic.}
Consider now the $h$-transform of the transition probabilities 
$p(e)$ of $P$, $e = (x,a,y) \in E$, given by
\begin{equation*}
p^h(e) = p^h(x,a,y) = \frac{p(x,a,y) h(y)}{\rho(P) h(x)}
\end{equation*}
and the associated transition matrix $P^h$ with entries 
$$
p^h(x,y) = \sum_{a\,:\, (x,a,y) \in E} p^h(x,a,y)\,.
$$
The Markov chain associated with $P^h$ is called the \emph{$h$-process.}

Then $\rho(P^h)=1$. Using uniform connectedness, we show that there is a 
constant $\bar \alpha > 0$ such that $p^h(e) \ge \bar\alpha$ for each
$e =(x,a,y)\in E$. Indeed, for such an edge, there is $k \le K$ such
that $d^+(y,x)=k$, whence
$$
\rho(P)^k h(y)  = \sum_{z \in X} p^{(k)}(y,z) h(z) \ge \alpha^k h(x)\,,
$$ 
so that
$$
p^h(x,a,y)\ge \bigl(\alpha/\rho(P)\bigr)^{k+1}\,.
$$ 
Recall that $K$ is the constant used in the definition of the uniform connectedness.
We can now choose $\bar \alpha = \bigl(\alpha/\rho(P)\bigr)^{K+1}$.
We see that with $P^h$ we are now in the situation of Step 1. Thus, forbidding
the transitions of $F$ for the Markov chain with transition matrix $P^h$, 
we get $\rho_{x,y}(P^h_F) \le 1 - \eps$ for all $x,y \in X$, where
$\eps > 0$.

We now show that $\rho_{x,y}(P^h_F) = \rho_{x,y}(P_F)/\rho(P)$, which will
conclude the proof.

For a path $\pi = e_1 \dots e_n$ from $x$ to $y$, let (as above) 
$\mathbb{P}(\pi)$ be the probability that the original Markov chain traverses
the edges of $\pi$ in $n$ successive steps, and let $\mathbb{P}^h(\pi)$
be the analogous probability with respect to the $h$-process.
Then
$$
\mathbb{P}^h(\pi) = \frac{\mathbb{P}(\pi)h(y)}{\rho(P)^nh(x)}\,.
$$
Let us write $\Pi_{x,y}^n(\neg F)$ for the set of all paths $\pi$ from
$x$ to $y$ with length $n$ for which $\ell(\pi)$ does not
contain a factor in $F$. Then the $n$-step transition probabilities
of the $h$-process with the transitions in $F$ forbidden are
$$
{p^h}^{(n)}_F(x,y) = \sum_{\pi \in \Pi_{x,y}^n(\neg F)} \mathbb{P}^h(\pi)
= \sum_{\pi \in \Pi_{x,y}^n(\neg F)}\frac{\mathbb{P}(\pi)h(y)}{\rho(P)^nh(x)}
= \frac{p^{(n)}_F(x,y)h(y)}{\rho(P)^nh(x)}
$$
Taking $n$-th roots and passing to the upper limit, 
we obtain the required
identity.
\end{proof}

With this result, it is now easy to deduce Theorem \ref{thm:A}.

\begin{proof}[Proof of Theorem \ref{thm:A}]
Since $(X,E,l)$ is deterministic with label alphabet $\Si$, 
the outdegree of every $x\in X$ is at most $|\Si|$. Equip the edges of $(X,E, \ell)$ 
with the transition probabilities 
$p(x,a,y) = 1/|\Si|$, when $(x,a,y)\in E$. Then the $n$-step transition 
probabilities of the resulting Markov chain are given by
\begin{equation*}
p^{(n)}(x,y)= \dfrac{\bigl|\{w \in L_{x,y}\,:\, \abs{w} = n\}\bigr|}
{|\Si|^n}.
\end{equation*}

Therefore, because $(X,E, \ell)$ is uniformly connected, we have
$$
\entr(X)=\entr(L_{x,y})  
=\limsup_{n\to\infty}\frac{1}{n} \log \bigl(p^n(x,y)|\Si|^n\bigr)
=\log \bigl(\rho(P)\cdot|\Si|\bigr).
$$
Analogously,
$$
\entr(L^F_{x,y})=\log\bigl(\rho_{x,y}(P_F)\cdot |\Si|\bigr).
$$
By Theorem \ref{thm:C}
$$
\sup_{x,y \in X} \rho_{x,y}(P_F) < \rho(P),
$$
and this implies that 
$$
\sup_{x,y\in X}\entr(L_{x,y}^F) < \entr(X)
$$
strictly.
\end{proof}

\textbf{Application to pairs of groups and their Schreier graphs}
\\[5pt]
Let $G$ be a finitely generated group and $K$ a (not necessary finitely 
generated) subgroup.  Let also $\Si$ be a finite alphabet and 
$\psi:\Si\rightarrow G$ be such that the set $\psi(\Si)$ generates $G$ as 
a semigroup. We extend $\psi$ to a monoid homomorphism from $\Si^*$
to $G$ by $\psi(w) = \psi(a_1)\cdots \psi(a_n)$, if $w = a_1 \dots a_n$
with $a_i \in \Si$ (and $\psi(\epsilon) = 1_G\,$). The mapping $\psi$ is
called a \emph{semigroup presentation} of $G$ in \cite{CeWo3}.  

The \emph{Schreier graph}  $X=X(G,K,\psi)$ has vertex set
\begin{equation*}
X=\{Kg: g\in G\}, 
\end{equation*}
the set of all right $K$-cosets in $G$, and the set of all labelled, 
directed edges $E$ is given by
\begin{equation*}
 E=\{e=(x,a,y): x=Kg, y=Kg \psi(a)\,,\; \text{where}\; g\in G\,,\;a\in\Si\}.
\end{equation*}
Note that the graph $X$ is fully deterministic and uniformly connected. 

The \emph{word problem} of $(G,K)$ with respect to $\psi$ is the language 
\begin{equation*}
 L(G,K,\psi)=\{w\in\Si^*: \psi(w)\in K\}.
\end{equation*}
The word problem for a recursively presented  group $G$ is the algorithmic problem 
of deciding whether two words represent the same element. Also, this
terminology is used in the context of formal language theory and goes back
at least to the seminal paper of {\sc Muller and Schupp}~\cite{muller_schupp_1983}.
For additional information, see also {\sc Muller and Schupp}~\cite{muller_schupp_1985}.
In their work, for a finitely generated group $G$ the \textbf{word problem}
$W(G)$ is the set of all words on the generators and their inverses which
represent the identity element of $G$.

If we consider the ``root'' vertex $o=K$ of the Schreier graph, then
in the notation of 
the introduction, 
we have $L(G,K,\psi)=L_{o,o}$, compare with
\cite[Lemma 2.4]{CeWo3}.

We can therefore apply Theorem \ref{thm:A} and Corollary \ref{cor:B} to the 
graph $X(G,K,\psi)$ in order to deduce that

\begin{cor}
The word problem of the pair $(G,K)$ with respect to any semigroup 
presentation $\psi$ is growth sensitive (with respect to forbidding an 
arbitrary non-empty finite subset $F\subset\Si^*$). 
\end{cor}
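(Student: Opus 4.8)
The plan is to apply Corollary \ref{cor:B} to the Schreier graph $X = X(G,K,\psi)$ with the root vertex $o = K$. Concretely, I would verify that $X$ is fully deterministic and uniformly connected, identify the word problem with one of the languages $L_{x,y}$, and then quote Corollary \ref{cor:B}.

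First I would check full determinism, which is immediate from the construction: for each vertex $x = Kg$ and each letter $a \in \Si$ there is exactly one edge with initial vertex $x$ and label $a$, namely $Kg \xrightarrow{a} Kg\,\psi(a)$, its terminal vertex being well defined since right multiplication by $\psi(a)$ respects right $K$-cosets. The substantive point is uniform connectedness, and this is where the group structure enters. Since $G$ is a group and $\psi(\Si)$ generates $G$ as a semigroup, for every $a \in \Si$ the inverse $\psi(a)^{-1}$ can be expressed as $\psi(u_a)$ for some nonempty word $u_a \in \Si^+$; put $C = \max_{a \in \Si} |u_a|$, which is finite because $\Si$ is finite. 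Given any edge $x \xrightarrow{a} y$ with $y = Kg\,\psi(a)$, reading $u_a$ along the edges starting at $y$ produces a path from $y$ to $Kg\,\psi(a)\psi(a)^{-1} = Kg = x$ of length $|u_a| \le C$. Hence every edge can be reversed within bounded length, which is precisely uniform connectedness with constant $C$.

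It remains to identify the language. By \cite[Lemma 2.4]{CeWo3} we have $L(G,K,\psi) = L_{o,o}$, so Corollary \ref{cor:B} yields that $L_{o,o}$ is growth sensitive, i.e.\ $\entr(L_{o,o}^F) < \entr(L_{o,o})$ for every finite nonempty $F \subset \Si^+$. Forbidding the empty word has no effect, since by the definition of factor no word has $\epsilon$ as a factor, so we may harmlessly reduce any $F \subset \Si^*$ to $F \cap \Si^+$; this gives the stated conclusion. I expect the only genuine obstacle to be the uniform connectedness step, namely the observation that $\psi(\Si)$ generating $G$ merely as a semigroup is already enough to write each $\psi(a)^{-1}$ as a bounded positive word and thereby reverse every edge by a path of uniformly bounded length.
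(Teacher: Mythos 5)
Your proposal is correct and follows the paper's own route: the paper simply asserts that the Schreier graph is fully deterministic and uniformly connected and then invokes Corollary \ref{cor:B} via the identification $L(G,K,\psi)=L_{o,o}$, and your verification of those assertions (in particular writing each $\psi(a)^{-1}$ as a positive word $\psi(u_a)$ of bounded length, which is exactly where semigroup generation is needed) is the intended justification. The only detail worth adding is that uniform connectedness as defined also includes strong connectedness, which follows from the same observation: for any two cosets $Kg$ and $Kg'$ the element $g^{-1}g'$ equals $\psi(v)$ for some $v\in\Si^+$, giving a path from $Kg$ to $Kg'$.
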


\end{document}